\def\bbR{\mathbb{R}}
\def\bbZ{\mathbb{Z}}
\def\bbH{\mathbb{H}}
\def\ra{\rightarrow}
\newcounter{mycount}[section]
\newtheorem{proposition}[mycount]{Proposition}
\def\shortonly#1{}
\def\bbX{\mathbb{X}}
\def\bbF{\mathbb{F}}
\def\bbH{\mathbb{H}}
\begin{document}
\title{Generating Regular Hyperbolic Honeycombs}
%
%
\author{Dorota Celi\'nska-Kopczy\'nska, Eryk Kopczy\'nski
\\ Institute of Informatics, University of Warsaw
\\ erykk@mimuw.edu.pl
}
%
%
\maketitle              
\begin{abstract}
Geodesic regular tree structures are essential 
to combat numerical precision issues that arise while working with large-scale computational hyperbolic geometry
and have applications in algorithms based on distances in such tessellations. 
We present a method of generating and applying such structures to the tessellations of 3-dimensional hyperbolic space.
\end{abstract}

Hyperbolic geometry is difficult to work with computationally due to its exponential growth. The volume of a ball of radius $r$ is exponential in $r$,
and thus, any coordinate system based on a tuple of real values will quickly run into numerical issues.
This is a serious problem in applications \cite{reptradeoff,tobias_alenex,rtvizfinal}, and one solution to this problem is to use tessellations, 
which can be generated combinatorially \cite{wpigroups,dhrg_sea,mltiles}.

In \cite{gentes} a general algorithm for generating 2D hyperbolic tessellations is presented. 
This paper presents a conceptually similar method for 3D tessellations (honeycombs). 
The most obvious honeycombs are the \emph{regular} honeycombs, denoted by the Schl\"afli symbol $\{p,q,r\}$, which means that every cell is a Platonic polyhedron with $p$-sided
faces, $q$ faces meeting at every vertex, and $r$ cells meeting at every edge; we recommend the visualizations in the paper by Nelson and Segerman \cite{hhoney}. 
Other than the applications in modeling hierarchical data \cite{bogu_internet,reptradeoff,mltiles}, this issue is important for video games based on
hyperbolic geometry; open-world procedural generation is a very good match for hyperbolic geometry \cite{hyperrogue}, and a good method for representing honeycombs
combinatorially is essential. In our experience, generating the basic hyperbolic honeycombs (such as \{4,3,5\}) is significantly harder than generating
the basic 2D hyperbolic tessellations such as {7,3} or {5,4}. At the moment of writing this paper, we know three projects of such games: HyperRogue, Hypermine and Hyperblock. HyperRogue uses the methods described in this paper.
Hypermine by Ralith uses an approach based on Coxeter triangles and Knuth-Bendix procedure \cite{wpigroups}; as in the 2D case \cite{gentes}, we prefer a method
working with the cells of the intended honeycomb directly. Up to our knowledge, the development of Hyperblock by Kayturs is currently stopped due to the challenge of working with hyperbolic honeycombs.%
\footnote{From all three projects, it seems that regular hyperbolic honeycombs are challenging not only to generate, but also from the purpose of designing levels. It might be easier to use honeycombs based on
the 3D analog of the binary tiling instead. Such an approach was suggested by Henry Segerman.} Another application of tree structures for hyperbolic honeycombs is finding the coordination sequences
($c_n$ is the number of cells $n$ steps from the chosen central cell). While such sequences for the most important regular hyperbolic honeycombs are of interest, prior to our work OEIS included only 
an incorrect coordination sequence for \{4,3,5\}.

We assume the reader is familiar with the basic ideas of the two-dimensional case \cite{gentes}. The main differences in the three-dimensional case are as follows.

\begin{itemize}
\item Two-dimensional tessellations are easily constructed synthetically: we find a collection of tile types, and create rules describing how they are to be connected.
It is enough for the angles at every vertex to sum to $360^\circ$ for this to yield a valid tessellation. This approach has been used to create the most tessellations in
the tessellations catalog \cite{tescatalog}. Up to our knowledge, similar methods have not yet been developed in higher dimensions; obvious honeycombs are obtained as
regular honeycombs, by subdividing the regular honeycombs, or by subdividing the cells of a hyperbolic manifold and taking an universal cover.

\item The synthetic method described above tends to yield tiles with $s$-fold rotational symmetry. The tessellation descriptions used by the algorithm in \cite{gentes} have this symmetry,
which introduces a bit of subtlety when we need to determine the parent of a given tile in the tree structure. In the 3D case, we have instead decided to break the symmetry, by
fixing a specific canonical rotation of every cell. The cells could have $s$-fold symmetries around an axis,
or the symmetries corresponding to one of the Platonic solid; however, all the cases known to us can be handled by finding closed hyperbolic manifolds constructed from gluing 
a number of $\{p,q,r\}$ cells. Once such manifolds are known, we can assume that our tessellation is an universal cover of some subdivision of a hyperbolic manifold, which is enough to
break the symmetry.

\item The method in \cite{gentes} classified the non-tree edges as Left ($L$) and Right ($R$); such a classification was enough to generate the non-tree connection in a very simple
way, and also to verify whether the GRTS learned by the Angluin-style algorithm was correct. In the three-dimensional case the situation is more challenging.

\item If $\{q,r\}$ is a Euclidean tessellation, the honeycomb $\{p,q,r\}$ has ideal vertices \cite{hhoney}; the structure of the cells incident to such an ideal vertex is Euclidean
(the cells roughly form a horosphere), which makes the whole graph of the honeycomb not Gromov hyperbolic. In our experience, this does not seem to affect the generation of honeycombs,
but it makes the amortized time complexity of our generation method, and other GRTS-based algorithms, larger than $O(1)$.

\item Higher dimension generally means more cells in the neighborhood of a given cell, which increases the computational resources required to work with hyperbolic honeycombs. It takes
the current implementation of our algorithm a few hours and over 50 GB of RAM to find.
\end{itemize}

To learn how to generate a hyperbolic honeycomb from a 3D GRTS, the reader has to read the Sections \ref{sec:periodic} and \ref{sec:geodesic}, which explain the definitions.
The descriptions of our periodic honeycombs and their GRTS can be downloaded from \url{https://figshare.com/articles/software/Generating_tree_structures_for_hyperbolic_honeycombs/20713306}.

The remaining sections detail our method of creating a 3D GRTS. Section \ref{sec:breaking} explains how we deal with the symmetries of regular hyperbolic honeycombs, and Section \ref{sec:main}
describes the our algorithm. We conclude with Section \ref{sec:concl}, which summarizes our results.
Section \ref{sec:details} discusses other details of our implementation. Appendix \ref{sec:tested} lists the honeycombs we have
tested our algorithm on and provides the characteristics of the tested honeycombs.

Our methods do not depend on the dimension in any significant way, so they could be potentially also applied to higher dimensions.

\section{Fixed Periodic Honeycombs}\label{sec:periodic}

We use a definition of a periodic honeycomb similar to that of a periodic tessellation in \cite{gentes}. As mentioned in the Introduction, one difference is that we do not allow the tiles
to have their own symmetries, in other words, all the cells have a fixed canonical rotation.

A fixed periodic honeycomb $G$ in space $\mathbb{X}$ (usually $\mathbb{X}=\bbH^3$) has a finite set of cell types $T$. Every cell type $t\in T$ has a polyhedral shape $S_t \in \bbX$ (which can have ideal or ultra-ideal vertices
\cite{hhoney};
however, we assume in this paper that it has only finitely many faces).
Let $F(t)$ be the set of all faces of $S_t$. We denote the set of all faces by $U=\bigcup F_(t)$. 
Every face $f \in F(t)$ connects to some face $f'$ of some $S_{t'}$; we denote this with $(t,f) \sim (t',f')$. When the face $f'$ is irrelevant, we also use the notation $t'=\nu(t,f)$.
There is an orientation-preserving isometry $C_{t,f}$ such that $S_t$ and $C_{t,f}S_{t'}$
share the face $f=C_{t,f}f'$. Every cell $g \in G$ has a type $t_g \in T$, such that there is an orientation-preserving isometry $I_g$ of $\bbX$ such that $g$ (as a subset of $\bbX$) is of form $I_g(S_t)$,
and the cells adjacent to $g$ are of form $I_gC_{t_g,f}S_t'$, where $f$ goes over all faces of $S_t$ and $t'=\nu(t,f)$. We denote the neighbor of $g$ on the other side of the face corresponding to 
$f \in F(t_g)$ by $\nu(g,f)$. These rules imply that if $g_1$ and $g_2$ are of the same type $t$, the isometry
$I_{g_2}I_{g_1}^{-1}$ is an isometry of the whole honeycomb that also takes every cell to another cell of the same type.

\section{Geodesic Regular Tree Structures}\label{sec:geodesic}

We use a similar definition of a regular tree structure (RTS) as in \cite{gentes}. As mentioned in the Introduction, we cannot label non-tree edges as Left/Right, so we need to use a different approach.

A regular tree structure for a honeycomb $(T,S,C,F)$ is $(Q, t:Q \ra T, e)$, where $e$ is a function from $\{(q,f): f\in F(t(q))\}$ to $Q \cup F^* \cup {P}$. $P$ denotes a parent connection; for every $q$,
there is at most one $f$ such that $e(q,f)=P$; and $F^*$ is the set of all finite sequences of faces. We call the case $e(q,f) \in Q$ a \emph{child connection}, and the case $e(q,f) \in F^*$ 
a \emph{side connection}. The sequence $e(q,f)\in F^*$ itself is called a \emph{side path}.

A regular tree structure lets us lazily generate a honeycomb $C$ as follows. Every cell $c\in C$ will have a state $q(c)$, and $t_c = t(q(c))$. 
We start with a tile $c_0$ in state $q_0$ which has no parent connections. Every new cell initially has no neighbors, i.e.,~$\nu(g,f)=NULL$ for every $f \in F(t(q(c)))$.
Cells are generated lazily. When querying for $\nu(c,f)$, we connect it to a tile as follows:

\begin{itemize}
\item If $e(q(c),f) = q'\in Q$, we create a new cell $c'=\nu(g,f)$ such that $q(c')=q'$, and set $\nu(c',f') = c$, where $(t(q),f) \sim (t(q'),f')$. In this case, $e(q',f')$ must equal $P$. This rule
arranges all tiles of $C$ in a tree.
\item If $e(q(c),f) = w \in F^*$, we generate a side connection by recursively constructing $c'=\nu(c',w)$, and connect $\nu(c,f)=c'$ and $\nu(c',f')=c$ (again, $(t(q),f) \sim (t(q'),f')$).
Here we extend $\nu: C \times F$ to $\nu: C \times F^*$ via $\nu(c,\epsilon)=c$ and $\nu(c,fw) = \nu(\nu(c,f),w)$ for $f \in F$, $w \in F^*$. We call this process the \emph{side connection procedure}.
In a correct RTS the side connection procedure will always terminate.
\item By the construction above, the situation $e(q(g),f) = q'\in Q$ can never occur.
\end{itemize}

By $\delta(c)$ we denote the shortest path from $c_0$ to $c$ in the adjacency graph. 
In a geodesic RTS, all the branches of the tree obtained are geodesic, that is, if $c$ is an $n$-th level descendant of $c_0$, then $\delta(c)=n$.

\section{Breaking the Symmetry of Regular Honeycombs}\label{sec:breaking}

Making a regular $\{p,q,r\}$ honeycomb fit the definition of fixed periodic honeycomb is not straightforward because that definition does not allow the tiles to have their own symmetries.
In this Section, we explain our method of breaking the intrinsic symmetry. This method is inspired by the method used for generating the manifold in David Madore's hyperbolic maze \cite{hypmaze}.
The two main ideas are as follows:
\begin{itemize}
\item Instead of $\bbX$, we use a closed manifold $M$ that can be exactly subdivided into $\{p,q,r\}$ cells. Every cell in such a manifold can then be given a fixed rotation. After taking the universal 
cover of such a manifold, every cell remains fixed, and we get tile types corresponding to the finitely many cells of $M$. 

\item In some cases there are simple examples of manifolds we could use:
for \{4,3,4\} we could use the torus obtained by gluing the opposite faces of the cube; for \{5,3,5\} we could use the Seifert-Weber space. In other cases, we can find an appropriate manifold by
performing the same computations as are used for generating the $\{p,q,r\}$ honeycomb in $\bbX^3$, but in a finite field instead of the group of isometries of $\bbX$.
\end{itemize}

Consider a cell $c_0$ in the regular $\{p,q,r\}$ honeycomb, let $f_0, f_1, f_2$ be its faces such that $f_1$ and $f_2$ are vertex-adjacent and both edge-adjacent to $f_0$, and let $v$ be one of the vertices of $f$. Let $c_1$ be the cell on the other side of $f_0$.
Let $e$ be the common edge of $f_0$ and $f_1$. Let $v$ be the common vertex of $f_0$, $f_1$ and $f_2$.
The group $G$ of orientation-preserving isometries of the regular $\{p,q,r\}$ tessellation is generated by the following three elements:
\begin{itemize}
\item $P$, which maps $c_0$ to $c_1$ and $c_1$ to $c_0$, and $e$ to $e$,
\item $R$, which rotates $c_0$ by mapping $f_1$ to $f_0$ to $f_2$,
\item $X$, which rotates $c_0$ to $c_0$ by mapping $f_0$ to $f_1$ and $f_1$ to $f_0$.
\end{itemize}

Note that the orders of $P$, $X$, $XP$, $R$ and $RX$ are 2, 2, $p$, $q$, and $r$, respectively. Also $XP=RPXR$.

Let $G'$ be a (finite) group generated by elements $P'$, $X'$, and $R'$. We say that $P'$, $X'$ and $R'$ are \emph{good} if 
the orders of elements $P'$, $X'$, $X'P'$, $R'$ and $R'X'$ are as above, i.e.,~2, 2, $p$, $q$, and $r$, and
$X'P'=R'P'X'R'$, and that the subgroup $H'$ generated by $X'$ and $R'$ is isomorphic to the group $H$ generated by $X$ and $R$, via an isomorphism $\phi$ which maps $X$ to $X'$ and $R$ to $R'$.
We obtain the set of types $T$ by taking one representative of every left coset. For every face $f$ of our cell, there is an isometry $I \in H$ which takes $f_0$ to $f$.
We set $\nu(t,f)=t'$ such that $t' = \phi(h) P' \phi(I)$, for some $h \in H$. The isometry $C_{t,f}$ is then $hPI$.

\begin{proposition}
The manifold $M_G$ obtained by taking a $\{p,q,r\}$ cell for every $t\in T$
and gluing them according to the $C_{t,f}$ isometries is a quotient space of $\bbX$, and thus yields a fixed periodic honeycomb.
\end{proposition}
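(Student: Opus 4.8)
The plan is to exhibit an explicit covering map $\pi\colon \bbX \to M_G$ and verify that $M_G$ inherits from $\bbX$ the structure required by the definition of a fixed periodic honeycomb in Section~\ref{sec:periodic}. First I would set up the dictionary between the combinatorial data and the geometry: in the true regular honeycomb $\{p,q,r\}$ living in $\bbX$, the group $G$ of orientation-preserving isometries acts transitively on the (rotated) cells, so after fixing the base cell $c_0$ with its reference face $f_0$ we may identify each cell with a left coset of the stabilizer $H=\langle X,R\rangle$ of $c_0$ (which is the orientation-preserving symmetry group of a single Platonic cell). The hypothesis that $P',X',R'$ are good gives, by von Dyck's theorem applied to the presentation of $G$ by generators $P,X,R$ with the stated relations on the orders of $P,X,XP,R,RX$ together with $XP=RPXR$, a surjective homomorphism $\psi\colon G \to G'$ with $\psi(P)=P'$, $\psi(X)=X'$, $\psi(R)=R'$; the extra assumption that $\psi$ restricts to an \emph{isomorphism} $H \to H'$ is exactly what is needed so that the coset spaces match up and the combinatorial type assignment $\nu(t,f)$ is well-defined on the level of cosets.

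Next I would show that $M_G$ is literally $\Gamma\backslash\bbX$ for $\Gamma=\ker\psi$, acting freely. Concretely: $G$ acts on $\bbX$ properly discontinuously and freely (it is the deck group of the tessellation-refined covering, or one simply uses that a nontrivial orientation-preserving isometry of $\bbX$ fixing the full honeycomb combinatorics is the identity), hence so does the subgroup $\Gamma$, and $\Gamma\backslash\bbX$ is a manifold with an induced local isometry from $\bbX$. The set $T$ of left cosets of $H$ in $G$ maps bijectively, under $\psi$, onto the left cosets of $H'$ in $G'$, and this index is finite since $G'$ is finite; picking one $\{p,q,r\}$-cell $S_t$ for each $t\in T$ and translating it into $\bbX$ by any representative isometry, the images tile $\bbX$ and descend to a tiling of $\Gamma\backslash\bbX$ by finitely many cells. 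I would then check that the gluing prescribed by the $C_{t,f}=hPI$ isometries is precisely the gluing induced by this quotient: the face $f$ of $S_t$, carried by $I\in H$ from $f_0$, is glued across $P$ (which flips $c_0$ to its $f_0$-neighbor) to the appropriate face of the cell $\phi(h)P'\phi(I)$ — this is the content of the displayed formulas $\nu(t,f)=\phi(h)P'\phi(I)$ and $C_{t,f}=hPI$, and it needs only bookkeeping with cosets modulo $H$ on the left. Once the gluing matches, $M_G$ is by construction $\Gamma\backslash\bbX$, and the quotient metric makes $\pi$ a Riemannian covering, so the $\{p,q,r\}$ structure lifts: the cell types are the elements of $T$, the isometries $C_{t,f}$ are the connection maps, the $I_g$ of the periodic-honeycomb definition are lifts of the covering along a spanning tree of cells, and the consistency relation $I_{g_2}I_{g_1}^{-1}\in G$ holds because any two lifts of the same cell of $M_G$ differ by a deck transformation. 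Hence $M_G$ yields a fixed periodic honeycomb.

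The main obstacle I anticipate is \emph{not} the covering-space formalism but verifying that $\Gamma=\ker\psi$ acts \emph{freely}, equivalently that no good choice of $P',X',R'$ can force a nontrivial element of $G$ into $\ker\psi$ that happens to fix a point of $\bbX$. This is where the hypothesis ``$\phi\colon H\to H'$ is an isomorphism'' does the real work: every point-stabilizer in $G$ for the action on $\bbX$ is conjugate into one of the cyclic/dihedral/polyhedral subgroups generated by the torsion elements $P,X,XP,R,RX$ and, more to the point, every finite subgroup of $G$ is conjugate into $H$ (the stabilizer of a cell, edge, or vertex all lie, up to conjugacy, inside a single cell's symmetry group), so $\psi$ being injective on $H$ — combined with the preservation of the orders of $P,X,XP,R,RX$, which pins down the torsion on the ``$P$ side'' as well — forces $\psi$ to be injective on every finite subgroup, hence $\ker\psi$ is torsion-free and acts freely. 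I would spend the bulk of the write-up making this torsion argument precise (identifying the conjugacy classes of finite subgroups of $G$, or invoking the known structure of $\{p,q,r\}$ Coxeter-type groups), and treat the remaining covering-space and coset-bookkeeping steps as routine.
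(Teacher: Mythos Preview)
The paper states this proposition without proof, so there is nothing to compare against; your outline is the natural covering-space argument and the coset bookkeeping is fine. However, the step you yourself flag as the crux --- that $\Gamma=\ker\psi$ acts freely --- is not handled correctly.

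First, a minor slip: $G$ does \emph{not} act freely on $\bbX$. Every generator $P,X,R$ is a finite-order rotation with a fixed axis, so $G$ has plenty of torsion and point-stabilizers; the parenthetical ``deck group of the tessellation-refined covering'' is simply false. You do seem to realize this, since you then argue torsion-freeness of $\Gamma$ directly, but the sentence as written is wrong.

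The real gap is the claim that ``every finite subgroup of $G$ is conjugate into $H$.'' This is false. The maximal finite subgroups of the orientation-preserving $\{p,q,r\}$ group fall into several conjugacy classes: cell stabilizers (rotation group of $\{p,q\}$), vertex stabilizers (rotation group of $\{q,r\}$), and the dihedral edge and face stabilizers. The vertex stabilizer is in general \emph{not} conjugate into the cell stabilizer $H$: for $\{4,3,5\}$ the cell stabilizer is the rotational cube group $S_4$ of order $24$, while the vertex stabilizer is the rotational icosahedral group $A_5$ of order $60$, which cannot sit inside $S_4$. So injectivity of $\psi$ on $H$ alone does not force injectivity on all torsion.

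What you actually need is that $\psi$ is injective on each of the four families of stabilizers. The cell stabilizer is $H=\langle X,R\rangle$ and is covered by the hypothesis $H\cong H'$. The edge and face stabilizers are dihedral, generated by pairs among $P,X,XP$ and $P,RX$, and the order hypotheses on $P',X',X'P',R'X'$ together with the automatic dihedral relation handle these. The vertex stabilizer, however, is $\langle R, XP\rangle$ (rotations about the axis through $v$ and about the edge $e$ through $v$), isomorphic to the rotation group of $\{q,r\}$, and showing $\psi$ is injective on it requires an additional argument from the relation $X'P'=R'P'X'R'$ and the stated orders --- this is the part you would have to work out, and it is not a consequence of $\phi\colon H\to H'$ being an isomorphism.
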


The manifold $M_G$ obtained is highly symmetric, that is, every $g \in G$ corresponds to an isometry $I_g$ of $M_G$ (the two-dimensional version of this method for $\{7,3\}$ yields Hurwitz surfaces).
For our application, we do not need this high symmetry, but we would like the set $T$ to be as small as possible. This is achieved by taking a subgroup $K' \subseteq G'$ such that $K' \cap H'$
consists only of the neutral element. In $M_G$ we identify every point $x$ with $I_kx$ for every $k \in K$, thus obtaining a smaller manifold which is still obtained by gluing $\{p,q,r\}$ cells.

We find an appropriate group $G'$ for a hyperbolic honeycomb as follows. We represent points of $\bbH^3$ in the Minkowski hyperboloid model, as points $(x_0,x_1,x_2,x_3) \in \bbR^4$ such that
$x_3>0$ and $x_3^2-x_2^2-x_1^2-x_0^2=1$. The group $G$ can be seen as a subgroup of the group of Minkowski hyperboloid isometries, and therefore, the group of 4x4 invertible matrices over $\bbR$. 
The matrices of $X$ and $R$ also satisfy $X_{i,3}=R_{i,3}=X_{3,i}=R_{3,i}=0$ for $i=0,1,2$, and $X_{3,3}=R_{3,3}=1$. Since $P$, $X$, $R$ are isometries of Minkowski space,
we also have $M^{-1}=AM^TA$ for $M \in \{P,R,X\}$, where $A$ is the diagonal matrix with $(1,1,1,-1)$ on diagonal.

We generate $G'$ by considering a similar construction but over a finite field $\bbF_n$. For simplicity, we use the values of $n$ which are primes or squares of primes; in the case of squares of primes,
we consider only elements whose squares are multiples of 1. We find $P'$, $R'$, and $X'$,
4x4 matrices over $\bbF^n$ which satisfy the properties given in the last paragraph, and are also \emph{good}.

See Appendix \ref{sec:regular} for a list of manifolds (and thus fixed periodic honeycombs) obtained.

\section{The Main Algorithm}\label{sec:main}

Just like in \cite{gentes}, the core algorithm of construction GRTS is based on Angluin-style learning. We recall the main ideas.

\begin{itemize}
\item During the construction of GRTS we have no access to the honeycomb $C$ together with $\delta(c)$ for every cell, but rather, we need to lazily generate an approximation $G$.
Our implementation in the three-dimensional case instead is based on computing the isometries
$I_c$ numerically: we start with a root cell $c_0$ with some $I_{c_0}$, and when generating a connection $c'=\nu(c,f)$, we compute $I=I_{c_0} C_{t_{c_0},f}$ and if a cell $c''$ of appropriate type such that
$I_{c'}=I$ already exists, we know that $c'=c''$, otherwise we create a new cell. To reduce numerical errors, the isometries here are computed relative to a honeycomb whose structure is known;
we use the three-dimensional analog of the binary tiling \cite{rtvizfinal}. While numerical errors are a potential possibility in this approach, our experiments show that they do not occur for the honeycombs
we tested our methods on. An alternative implementation would not compute isometries $I_c$ and instead close loops if all the cells around an edge are known,
in the manner similar to the 2D case \cite{gentes}.
\item We denote by $\delta(c)$ the distance from $c$ to $c_0$; we have no access to $\delta(c)$, only to $\delta^+(c)$, the best currently known upper bound on $c$. We compute $\delta^+(c)$ as in the two-dimensional case \cite{gentes}. We also
use the shortcut method \cite{gentes} in order to avoid \emph{distance errors} ($\delta^+(c) \neq \delta(c)$) as much as possible.
\item We assume that the faces $F(t)$ are ordered for every $t \in T$. The parent of every cell $c \neq c_0$ is simply the cell $c'=\nu(c,f)$ such that $\delta(c')=\delta(c)-1$, and $f$ is the first face
with this property. Our algorithm will find the first face such that $\delta^+(c')=\delta^+(c)$, which may be a wrong face due to distance errors.
\item In \cite{gentes}, the core algorithm finds the state of $c$ by classifying the faces of $c'$ (which is $c$ or its descendant) into eight categories (child, parent, left/right edge of relative distance -1/0/+1). 
In the three-dimensional case, the categories are child, parent, and an extended side path $(p,d) \in F^* \times \bbZ^*$. Here $p$ is similar to the side path from the definition of GRTS; for every cell $c_i = \nu(c', p_1 \ldots p_i)$ we
also record $d_i = \delta^+(c_i)-\delta^+(c)$. The extended side path $(p,d)$ consists of a move from $c'$ to its parent, a \emph{middle sequence} of moves in the part of the honeycomb consisting of tiles $c''$ such that $\delta^+(c'')<
\delta^+(c')$, followed by the \emph{final moves}, that is at most two child edges. The middle sequence is found using breadth-first search in a deterministic way, i.e., if the relative distances in the neighborhood of $c_1$
and the relative distances in the neighborhood of $c_2$ are the same, the extended side path found will also be the same.
\item The cells are assigned to states based on the classification of all the faces of $c$ and possibly additional faces of descendants of $c$.
We get an example cell $s_q \in G$ for every state $q \in Q$. The RTS rule for face $f$ of state $q$
is obtained from the classification for the face $f$ of $s_q$. In case if $\nu(s_q,f)$ is a child, the rule is the state assigned to $\nu(s_q,f)$. (If the classifications are equal for $c_1$ and $c_2$ but
the states for children $\nu(c_1,f)$ and $\nu(c_2,f)$ are different, we improve the decision trees so that $c_1$ and $c_2$ are two different states \cite{gentes}.)
We actually get an \emph{extended} RTS, that is, one where for side connections we not only know the side path $p \in F^*$ but also the relative distances $d \in \bbZ^*$.
\end{itemize}

The way to verify the correctness of such a candidate extended GRTS is explained in the next section.

\subsection{Verifying the 3D GRTS}\label{sec:verify}

The verification of the correctness of a candidate extended GRTS is based on automata theory \cite{wpigroups}.

A correct GRTS will generate a tree; every cell $c \in C$ will have a path $w \in F^*$ leading to it. Let $L$ be the set of all such paths. From the construction we know that $L$ is a regular language.
For every $w \in L$, let $c(p)$ be the cell described by $w$, $q(w)\in Q$ be its state and $t(w)\in T$ be its type. 
For every $w \in L$ and every side face $f \in F(t(w))$, our extended GRTS gives a way to compute $w' \in L$ such that $c(w') = \nu(c(w), f)$. We will denote such $w'$ with $\nu(w, f)$. We also denote 
the length of $w$ by $\delta(w)$.

\begin{proposition}\label{prop:dist}
Suppose that for every $w \in L$ and for every side rule $e(q(w),f) = (p,d) \in F^* \times \bbZ^*$ the following property holds: for every $i$ from 1 to the length of $p$, 
we have $\delta(\nu(w,p_1\ldots p_i)) = \delta(w) + q_i$. Furthermore, the moves corresponding to the final moves are indeed child moves. Then the side connection procedure will always terminate.
\end{proposition}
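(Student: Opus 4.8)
The plan is to prove termination by a well-founded descent on the depth $\delta(w)$ of the cell at which the side connection procedure is invoked. Recall what the procedure does when asked to realize a side connection $\nu(c(w),f)$ whose rule is $e(q(w),f)=(p,d)$: it produces the cells $c_0=c(w)$, $c_1=\nu(c_0,p_1)$, $\dots$, $c_m=\nu(c_{m-1},p_m)$ one after another, and each single step $c_i=\nu(c_{i-1},p_i)$ is resolved by the rule attached to the face $p_i$ of the state $q(c_{i-1})$, namely: a child rule manufactures a fresh cell, the parent rule $P$ returns the already present tree-parent, and only a side rule triggers a nested call of the procedure. Hence the run on $w$ halts as soon as every nested call it triggers halts, and the proposition reduces to showing that the tree of nested calls is finite.

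To control that tree I would use the shape of an extended side rule as produced in Section~\ref{sec:main}: the side path $p$ splits into (i) the opening move $p_1$, which is the move from $c(w)$ to its tree-parent, hence the face carrying the rule $P$, so $d_1=-1$ and this step triggers no nested call; (ii) a middle sequence $p_2,\dots,p_k$, whose moves all stay among cells of negative relative depth, so that $d_i<0$ for $1\le i\le k$; and (iii) at most two final moves $p_{k+1},\dots,p_m$, which by hypothesis are genuine child moves and therefore again trigger no nested call. Feeding in the other hypothesis, exactness of the recorded relative depths, namely $\delta(\nu(w,p_1\cdots p_i))=\delta(w)+d_i$ for all $i$, gives the key estimate: a nested call can occur only at a middle step $p_i$ with $2\le i\le k$, and it is then invoked at the cell $c_{i-1}=\nu(c(w),p_1\cdots p_{i-1})$, whose depth equals $\delta(w)+d_{i-1}$ and is strictly less than $\delta(w)$ since $1\le i-1\le k-1$ forces $d_{i-1}<0$. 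So every nested call lives at a cell of strictly smaller depth than the call that spawned it.

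The argument then closes by strong induction on $n=\delta(w)$: assuming the procedure halts on every cell of depth $<n$, a run at depth $n$ performs only finitely many steps $\nu(c_{i-1},p_i)$ (there are $m=|p|$ of them), and each of them either halts outright (a child or parent step) or recurses at a cell of depth $<n$ and hence halts by the inductive hypothesis; so the run at depth $n$ halts. Equivalently, the tree of nested calls is finitely branching while the depths along any root-to-leaf chain form a strictly decreasing sequence of non-negative integers, so the tree has no infinite chain and is finite. The value $n=0$ needs no separate treatment: the exactness hypothesis already forbids any middle step there, as it would demand a cell of negative depth, so no nested call is made at the root.

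The part I expect to cost the most care is the bookkeeping in the second paragraph --- verifying that an extended side rule's side path really has the three-part form above, and in particular that its opening move to the tree-parent is implemented by the rule $P$ rather than by another side rule (were it a side rule it could recurse at the same depth and the descent would collapse), and that the trailing moves are exactly the child moves named in the hypothesis. Both facts belong to the way Section~\ref{sec:main} assembles the extended side rules, and it is precisely here that the two assumptions of the proposition are used: exactness of the relative depths turns the algorithm's bookkeeping values $d_i$ into honest depth bounds, and the final moves being child moves closes off the last positions where a nested call could be made at depth $\ge\delta(w)$. Without either assumption one could have a nested call that fails to descend, and the recursion might then not terminate.
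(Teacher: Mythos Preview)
Your proposal is correct and follows exactly the approach of the paper, whose entire proof reads ``Induction over $\delta(w)$.'' You have supplied the detailed unpacking of that induction---identifying the three-part structure of a side path (parent move, middle sequence at strictly smaller depth, at most two child moves) and arguing that nested calls can arise only in the middle portion, hence at cells of strictly smaller depth---which is precisely what the one-line proof leaves implicit.
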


\begin{proof}
Induction over $\delta(w)$.
\end{proof}

For every $t \in T$, let $\Gamma(t) \subseteq F^*$ be the set of all sequences of moves corresponding to cycles around every edge of $S_t$. Note that for every $c \in C$ and every $\gamma \in \Gamma(t_c)$
we have $\nu(c,\gamma) = c$.

\begin{proposition}\label{prop:cycle}
Suppose that $\bbX$ is simple connected, and for every $w \in L$ and for every $\gamma \in \Gamma(t(w))$ we have $\nu(w,\gamma)=w$. Then our GRTS generates every cell of $C$ at most once, that is, if $w_1 \neq w_2 \in L$, $c(w_1) \neq c(w_2)$.
\end{proposition}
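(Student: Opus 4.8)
The plan is to realize the structure generated by the candidate GRTS as a developing map into $\bbX$ and then use simple connectedness to force that map to be injective on cells. Concretely, let $\tilde C$ be the complex obtained by taking, for each $w \in L$, a copy $S_w$ of the polyhedron $S_{t(w)}$, and gluing the face of $S_w$ labelled $f$ to the appropriate face of $S_{\nu(w,f)}$ by the isometry dictated by the GRTS (a single $C_{t,f}$ for a child or parent move, and the composite of the $C$-matrices along the side path for a side move). Each $w \in L$ carries an isometry $I(w)$, obtained from $I_{c_0}$ by multiplying the $C$-matrices along $w$, and the assignment $x \in S_w \mapsto I(w)x$ assembles into a map $\mathrm{dev}\colon \tilde C \to \bbX$ that is an isometry on each cell and is compatible with the face gluings by construction. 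Since the generated honeycomb $C$ has exactly one cell $c(w)$ per $w \in L$ (child moves always create a fresh cell, and side moves only follow connections that already exist), and since the symmetry of $\{p,q,r\}$ has been broken, so that a cell of a fixed periodic honeycomb determines its isometry, two cells $c(w_1),c(w_2)$ coincide if and only if $(I(w_1),t(w_1)) = (I(w_2),t(w_2))$; hence it suffices to prove that $\mathrm{dev}$ is injective.

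The first step is to upgrade $\mathrm{dev}$ to a local isometry of the underlying spaces, i.e.~to check that $\tilde C$ is locally modelled on $\bbX$. In the interior of a cell and across a face this is immediate. Around an edge it is exactly the hypothesis of the proposition that does the work: the cycle $\gamma \in \Gamma(t(w))$ of face moves that turns once around a given edge of $S_w$ satisfies $\nu(w,\gamma) = w$, and — again because all symmetries have been broken, so this equality of cells forces the associated composite isometry to be the identity fixing that edge — the $r$ cells around the edge fit together to fill a full neighborhood of the edge exactly once, rather than a multiply-wound one. When $\{q,r\}$ is spherical, the vertex links are then automatically copies of the spherical tessellation $\{q,r\}$, so $\tilde C$ is a manifold locally isometric to $\bbX$; when $\{q,r\}$ is Euclidean the vertices are ideal, lie on the boundary at infinity of $\bbH^3$, and the same argument applies to $\tilde C$ with those points deleted, the cross-sections near them being the expected Euclidean horospherical pattern.

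Finally, $\tilde C$ is connected, every $w \in L$ being joined to the root $\epsilon$ by its child path. A local isometry from a connected complete length space onto the complete simply connected space $\bbX = \bbH^3$ is a covering map, and a covering of a simply connected space is a homeomorphism, so $\mathrm{dev}$ is a bijection. Consequently distinct $w_1,w_2 \in L$ yield cells $S_{w_1},S_{w_2}$ of $\tilde C$ whose images $I(w_1)S_{t(w_1)}$ and $I(w_2)S_{t(w_2)}$ are distinct regions of $\bbX$, so $(I(w_1),t(w_1)) \neq (I(w_2),t(w_2))$ and therefore $c(w_1) \neq c(w_2)$. I expect the main obstacle to be the verification that the edge-cycle condition genuinely makes $\tilde C$ a \emph{complete} space locally isometric to $\bbX$: the completeness point is where ideal or ultra-ideal vertices have to be handled with care; it is routine for compact cells, and for the honeycombs relevant to our application one checks directly that the finitely many local pictures are the correct ones.
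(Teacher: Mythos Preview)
Your argument is correct but takes a genuinely different route from the paper. You build an abstract complex $\tilde C$ indexed by $L$, use the edge hypothesis to make it locally isometric to $\bbX$, and then invoke covering-space theory over a simply connected target. The paper instead gives a short combinatorial argument working entirely at the level of words: assuming $c(w_1)=c(w_2)=c$, the concatenation $u$ of the reverse of $w_1$ with $w_2$ is a face-path with $\nu(w_1,u)=w_2$ that is geometrically a loop at $c$; since $\bbX$ is simply connected this loop is null-homotopic, and a homotopy avoiding vertices and crossing one edge at a time exhibits $u$ as obtainable from the empty path by successively inserting or deleting edge-cycles $\gamma\in\Gamma(t)$. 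Each such step preserves $\nu(w_1,\cdot)$ by the hypothesis, so $\nu(w_1,u)=\nu(w_1,\epsilon)=w_1$, forcing $w_1=w_2$. The paper's approach has the advantage that it never needs to analyse vertex links or argue completeness of $\tilde C$ --- exactly the two points you correctly flagged as needing care, and which are genuinely delicate when vertices are ideal or ultra-ideal. Your approach makes the geometric content explicit and ties the statement to the standard theory of developing maps, but at the cost of that extra bookkeeping; the paper's argument is the more economical of the two.
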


\begin{proof}
To the contrary, suppose that $w_1 \neq w_2$ and $c(w_1) = c(w_2) = c$. In this case, there is a sequence of moves $u \in F^*$ such that $\nu(w_1, u) = w_2$. This $u \in F^*$ consists of the inverses of moves in $w_1$
followed by $w_2$. Geometrically, $u$ is a loop from $c$ to $c$, and since $\bbX$ is simple connected, it is homotopic to the empty loop. We can take a homotopy which never crosses the vertices and never crosses two edges at
the same time. Therefore, there is a sequence of paths $u_0, u_1, \ldots, u_n$ such that $\nu(c,u_i)=c$ for every $i$, $u_0$ is the empty path, $u_n=u$, and $u_i$ and $u_{i+1}$ always differ only in that they go around some edge
in two different ways, that is, the difference between $u_i$ and $u_{i+1}$ is one of the cycles in $\Gamma(t)$ for some $t$. By induction, we have that $\nu(w_1,u_i)=w_1$ for every $i$, and thus, $\nu(w_1,u)=w_1$, hence $w_1=w_2$.
\end{proof}

Therefore, to guarantee that our GRTS generates every cell of $C$ exactly once and that it terminates, we need to verify that the assumptions of both Propositions \ref{prop:dist} and \ref{prop:cycle} hold.

\def\blank{\square}

For $w, u \in F^*$, let ${w \choose u} \in (F \cup \{\blank\})^2$ be the word of length $\max(|w|, |u|)$ such that ${w \choose u}_i = (w_i, u_i)$ if $i \leq |w|$ and $i \leq |u|$; if $i > |w|$ or $i > |u|$, we take $\blank$
instead of $w_i$ or $u_i$ respectively. For $t \in T$ and $f \in F(t)$, the set $X_{t,f} = \{{w \choose u}: w \in L, t(w)=t, u=\nu(w,f)\}$ is a regular language. This language is recognized by a deterministic transducer $A_{t,f}$.
A {\emph transducer} is an automaton over $(F \cup \{\blank\})^2$ which recognizes a relation $R \subseteq F^* \times F^*$: $(w,u) \in R$ iff ${w \choose u}$ is accepted by the transducer.
In this paper, all the transducers are expected to recognize functions. Standard automata-theoretic
techniques can be used to create a transducer representing identity function on a regular language $L \subseteq F^*$, compose two transducers, find a $w \in L$ which is not in the domain of a transducer, or find a $w \in L$ such
that ${w \choose u_1}$ and ${w \choose u_2}$ where $u_1 \neq u_2$ are both accepted by a given transducer (i.e., the transducer does not define a function).

The following observation is straightforward:

\begin{proposition}\label{prop:mn}
Let $w,u,w',u' \in F^*$ be of the same length.
We say that ${w \choose u}$ and ${w' \choose u'}$ are equivalent if 
$q(w)=q(w')$, $q(u)=q(u')$, and such that there exists an isometry $J$ such that $I_{c(w)}J = I_{c(u)}$ and $I_{c(w')}J=I_{c(u)}J$.

If ${w \choose u}$ and ${w' \choose u'}$ are equivalent, they are also Myhill-Nerode equivalent for every $X_{t,f}$, that is, for every $w'',u'' \in F^*$, 
we have ${ww'' \choose uu''} \in X_{t,f}$ if and only if ${w'w'' \choose u'u''} \in X_{t,f}$.
\end{proposition}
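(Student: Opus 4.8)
The plan is to unfold the definitions and show that the equivalence relation defined on pairs $\binom{w}{u}$ refines the Myhill–Nerode relation of each language $X_{t,f}$. The key observation to exploit is that the isometry $I_c$ attached to a cell, together with its state $q(c)$, completely determines the future behavior of the lazy honeycomb construction from $c$ onward: any two cells with the same state and "the same" relative isometry will generate isomorphic subtrees, and the side-connection/child rules depend only on states, never directly on the absolute position in $\mathbb X$. So first I would make precise what it means for $J$ to be "the same" relative isometry in both pairs: the hypothesis says $I_{c(w)}J = I_{c(u)}$ and $I_{c(w')}J = I_{c(u')}$ (I read the second displayed equation as a typo for $I_{c(w')}J = I_{c(u')}$), i.e. the word $w$ steers the construction from $c(w)$ to $c(u)$ by exactly the same sequence of relative cell-to-cell isometries as $w'$ steers from $c(w')$ to $c(u')$ — because a path of moves $f_1 f_2 \ldots$ multiplies the isometry on the right by $C_{t,f_1} C_{\cdot,f_2}\cdots$, a product that depends only on the states/types encountered, which are equal along the two runs.

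Next I would argue the core implication: given equivalent $\binom{w}{u}$ and $\binom{w'}{u'}$, and arbitrary continuations $w'', u'' \in F^*$, I want $\binom{w w''}{u u''} \in X_{t,f}$ iff $\binom{w' w''}{u' u''} \in X_{t,f}$. Unwinding the definition of $X_{t,f}$, membership of $\binom{ww''}{uu''}$ means: $ww'' \in L$, $t(ww'') = t$, and $\nu(ww'', f) = uu''$. Since $L$ is the language of tree-paths generated by the GRTS and since $q(w)=q(w')$, the state reached after reading $w$ equals that reached after reading $w'$; hence appending $w''$ keeps us in $L$ (or leaves $L$) in exactly the same way for both, and it yields $t(ww'') = t(w'w'')$ and $q(ww'') = q(w'w'')$. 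The same holds for $u, u''$ using $q(u)=q(u')$. It remains to transfer the equation $\nu(ww'',f) = uu''$ to $\nu(w'w'',f) = u'u''$: the move $f$ followed by the side-connection procedure is driven purely by states and by the cycle/edge combinatorics $\Gamma(t)$, so the relative isometry produced is the same; composing with the relative isometry $J$ (resp. the corresponding $J'$ one gets from the continuation $w'' / u''$, which is again state-determined and hence common) shows that $c(ww'')$ and $c(uu'')$ are related by the same relative isometry as $c(w'w'')$ and $c(u'u'')$, and since two cells with equal state and equal relative isometry are the same cell, $\nu(w'w'',f) = u'u''$ follows. Because $X_{t,f}$ is a set of pairs of equal-length words, I also need $|w|=|u|$ to be preserved — this is automatic since $L$ consists of geodesic paths and equal states force equal "distance behavior," but I would note it explicitly.

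The step I expect to be the main obstacle is bookkeeping the relative isometries correctly through the recursive side-connection procedure: a side rule $e(q,f)=(p,d)$ triggers a recursive call $\nu(c,p)$ whose own side rules may recurse further, so the isometry relating $c(ww'')$ to $c(uu'')$ is built from a potentially deep composition. I would handle this by an induction matching the termination induction of Proposition~\ref{prop:dist} (on $\delta$), with the invariant that at every stage the relative isometry is a product of $C_{t,f}$'s indexed solely by the sequence of states/faces visited, hence identical along the two parallel runs; the equality $X P = R P X R$ and the relations on orders of $P,X,R$ guarantee the two runs stay synchronized around every edge, so no divergence can occur. Everything else — that appending equal continuations to equal states stays inside the regular language $L$, that types and states are functions of the run, that "equal state + equal relative isometry $\Rightarrow$ equal cell" — is immediate from the construction in Sections~\ref{sec:periodic} and~\ref{sec:geodesic}, so the proof should ultimately be a short unwinding once the isometry-composition invariant is stated cleanly; I would likely present it in two or three sentences, which matches the paper's description of the result as "straightforward."
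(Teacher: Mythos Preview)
The paper gives no proof of this proposition beyond the sentence ``The following observation is straightforward,'' so there is nothing to compare against directly. Your core idea is the right one and would constitute a valid proof: because $q(w)=q(w')$, the GRTS treats the continuations $w''$ identically, so $ww''\in L \Leftrightarrow w'w''\in L$, $q(ww'')=q(w'w'')$, and $I_{c(ww'')}=I_{c(w)}K$, $I_{c(w'w'')}=I_{c(w')}K$ for the \emph{same} product $K$ of connection isometries $C_{t_i,f_i}$; likewise for $u,u'$ with a common $K'$. Then $\nu(ww'',f)=uu''$ reduces to the isometry equation $I_{c(w)}K\,C_{t,f}=I_{c(w)}J\,K'$, i.e.\ $KC_{t,f}=JK'$, which is symmetric in the primed and unprimed data, so the Myhill--Nerode equivalence follows in two lines.

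Where you go astray is in the second half of the proposal. The detour through the side-connection procedure, the induction matching Proposition~\ref{prop:dist}, the cycle sets $\Gamma(t)$, and the Coxeter-type relation $XP=RPXR$ are all unnecessary and partly misplaced: $\Gamma(t)$ and Proposition~\ref{prop:dist} are used elsewhere to show the procedure \emph{terminates} and the honeycomb is \emph{correctly} generated, but here you only need that $\nu(w,f)$ is the unique tree-path of the neighboring cell, which is a geometric statement about $I_{c(w)}C_{t,f}$, independent of how the procedure finds it. The relation $XP=RPXR$ belongs to Section~\ref{sec:breaking} and plays no role in this automata-theoretic argument. Finally, your remark that $X_{t,f}$ consists of pairs of equal-length words is incorrect (the neighbor can be a parent or a side cell at a different depth; the $\square$-padding handles this), though this does not affect the proof.
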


By Myhill-Nerode theorem, the states of $A_{t,f}$ correspond to the classes of Myhill-Nerode equivalence above. Therefore, we can construct $A_{t,f}$ iteratively as follows.
We start with an empty transducer, and iteratively add transitions to it. Find a word $w \in L$ such that $t(w)=t$ and such that there is no $u$ such that ${w \choose u}$ is in the language
of the current $A_{t,f}$. We compute $u = \nu(w,f)$ and add new transitions and accepting states to $A_{t,f}$ so that ${w \choose u}$ will be accepted by it.
Each state of $A_{t,f}$ corresponds to one class of equivalence from Proposition \ref{prop:mn} (that is, if ${w \choose u}$ and ${w' \choose u'}$ are equivalent,
we keep the invariant that $A_{t,f}$ reaches the same state after reading both words).

During the computation of $\nu(w,f)$ we verify whether Proposition \ref{prop:dist} holds for it; if no, we can find a cell in $G$ for which our candidate GRTS fails. We add such cell to
the set $S$ of sample cells, and go back to the Angluin-style algorithm. A similar thing is done if we detect that $A_{t,f}$ accepts both ${w \choose u_1}$ and $w \choose u_2$.

Once the tranducers $A_{t,f}$ are computed for every $t \in T$ and $f \in F(t)$, it is straightforward to verify that Properties \ref{prop:dist} and \ref{prop:cycle} hold for every
$w \in L$. For example, to verify whether Proposition \ref{prop:cycle} works for $\gamma \in \Gamma(t(w))$, we construct a transducer $I_t$ which recognizes the language $\{{w \choose w}: w \in L, t(w)=t\}$.
Then we compose this transducer with $A_{t,\gamma_1}$, $A_{\nu(t,\gamma_1),\gamma_2}$, \ldots, $A_{\nu(t,\gamma_1\gamma_2\ldots\gamma_{n-1}), \gamma_n}$. The transducer $I'_t$ obtained after all these compositions
should be again $I_t$. If not, we can again find $w \in L$ for which ${w \choose w}$ is not in the language of $I_t'$, find a responsible cell in $G$, and add that cell to the set $S$ of sample cells, so the
next iteration of the Angluin-style algorithm will find a new candidate GRTS which avoids the problem detected.

\subsection{Other Details}\label{sec:details}

Here we list the minor details of the implementation we have used for our experiments.
\begin{itemize}
\item Like in \cite{gentes} we actually build our trees not from a single root, but from a root for every possible tile type $t \in T$.
\item As explained above, for every cell we want to classify, we need to build extended side paths for all the side connections. It is possible that such extended side paths could be built incorrectly because
of some cells or their connections not being known, or distance errors. To prevent this, we use the following approach. Suppose we find out that we can reach $c'=\nu(c,f)$ for some $c\in G$ and $f\in F(t_c)$
by following a path $c=c_0, c_1, \ldots, c_k=c'$. Then we remember the subtrees connecting $c$ to every $c_i$, let $c'_i$ be the least common ancestor of $c$ and $c_i$. 
Later, when determining side connections for some other $d \in G$, we check already constructed subtrees and see whether for any of them, the path from $c$ to $c'_i$ matches tha path from $d$ to one
of its ancestors $d'_i$ (that is, the same sequence of tile types and their faces). If so, we follow the path from $c'_i$ to $c_i$ but starting from $d'_i$, and thus construct $d_i$, which we can expect to
be on the side path we want for $d$.
\item Before starting the transducer-based algorithm described above, we perform the following optimization. Our candidate tree structure lets us compute, assuming that the candidate tree structure is correct,
for every state, the representative of that state which is the closest to the root (in case of a tie, the one with the first root, and lexicographically first path from the root, according to some order on
tile types and faces). We check if the states of such candidates are as expected. We also check if there are any states that our classifier has found in $G$ but turn out to not be reachable in the candidate
tree; if so, we compute the states for all the cells from the cells which gave us such states to their roots -- we are guaranteed to find some inconsistency with our candidate tree structure. If no
inconsistencies are found, the next iteration of our algorithm will take the set of closest representatives as the set $S$. We have found that such an approach to detect candidate tree errors and 
optimizations significantly improves the performance of our algorithm.
\item To verify \ref{prop:dist} (and similarly \ref{prop:cycle}) we essentially need to compute $\nu(w,f)$ for every $w$ and every $f \in F(t(w))$ and see if the distances are correct.
To compute $\nu(w,f)$ we typically do not need to know whole $w$, but rather only a suffix of it. A quick method of preliminary verification whether Propositions \ref{prop:dist} and \ref{prop:cycle} hold
is to verify them for all suffixes of length at most $l$. In our implementation, before starting the transducer-based algorithm, we do such a verification for $l \leq 3$.
\item While computing $\nu(w,f)$ and verifying Proposition \ref{prop:dist} for it, our implementation only checks the distances -- according to Proposition \ref{prop:dist}, we should also check whether
the final moves indeed correspond to child connections. We also skip the final transducer-based check of Proposition \ref{prop:dist} for all words. This reduces the time needed for our experiments, while
not fully verifying that our side connection procedure from Section \ref{sec:geodesic} terminates. We do not consider this a big problem: we have checked some words in the process of building the transducers,
and also, since the transducers have been built successfully, we know that we could build the side connections based on the transducers rather than using the side connection procedure.
This does not affect the correctness of coordination sequences.
\end{itemize}

\section{Conclusion}\label{sec:concl}

The algorithm described above correctly generates a GRTS for the fixed periodic honeycombs we have tested it on. See Appendix \ref{sec:tested} for the list of such honeycombs with their characteristics, and 
\url{https://figshare.com/articles/software/Generating_tree_structures_for_hyperbolic_honeycombs/20713306} for the implementation and GRTS obtained.

\section*{Acknowledgments}
This work has been supported by the National Science Centre, Poland, grant UMO-2019/35/B/ST6/04456.

\bibliographystyle{alpha}
\def\ext#1{#1}
\bibliography{../master}

\appendix

\section{Regular Honeycombs}\label{sec:regular}

This table lists the manifolds (and thus fixed periodic honeycombs) obtained using the methods in Section \ref{sec:breaking}.
The left column ($pqr$) lists the honeycomb (${p,q,r}$). The field column gives the size of the field used. The cells column is the number of cells in the
symmetric manifold. The quotients column lists the humbers of cells in the available quotient spaces. The hash column is used to identify a particular symmetric manifold in the RogueViz engine.
The * symbol denotes the fixed periodic honeycomb we have chosen. For \{4,3,4\} we use the torus obtained by gluing the opposite faces of the cube, and for \{5,3,5\} we use the Seifert-Weber space. 

\begin{center}
\begin{longtable}{llrrrl}

\hline
$pqr$ & hash & prime & field & cells & quotients \\
\hline

336 & D29C2418 & 3 & 3 & 10 & 5,1* \\
336 & 7769A558 & 5 & 25 & 650 & 50 \\
336 & 640FB3D4 & 7 & 7 & 28 & 14,1 \\
336 & C734F868 & 7 & 7 & 28 & 14,1 \\
336 & E3F6B7BC & 7 & 49 & 672 & 336,28,16 \\
336 & 885F1184 & 7 & 49 & 672 & 336,42,28,16 \\

\hline

344 & B23AF1F4 & 3 & 3 & 5 & 1* \\
344 & 4F9920E0 & 3 & 3 & 5 & 1 \\
344 & 6DBBAAA0 & 3 & 3 & 10 & 5,1 \\
344 & F81E97B0 & 3 & 3 & 10 & 5,1 \\
344 & F790CEA4 & 3 & 3 & 30 & 6,3 \\
344 & C95EC8B8 & 3 & 3 & 30 & 3 \\
344 & 16518434 & 3 & 9 & 16 & 8,4,2 \\
344 & 558C8ED0 & 5 & 5 & 600 & 20,25 \\
344 & 1EC39944 & 5 & 5 & 600 & 20,25 \\
344 & AF042EA8 & 5 & 25 & 2400 & 1200,600,60,40 \\
344 & EC29DCEC & 5 & 25 & 2600 & 1300,650,100 \\
344 & D26948E0 & 5 & 25 & 2600 & 1300,650,100 \\

\hline

345 & F978E264 & 3 & 3 & 30 & 6,3* \\
345 & 02ADCAA4 & 3 & 3 & 30 & 6,3 \\
345 & 7EFE8D98 & 5 & 25 & 650 & 50,25 \\
345 & F447F75C & 11 & 11 & 55 & 5 \\
345 & 58A698B8 & 19 & 19 & 285 & 15 \\
345 & 6FA03030 & 19 & 19 & 285 & 57,15 \\

\hline

353 & 1566EBAC & 5 & 25 & 130 & 10 \\
353 & 5A2E2B88 & 11 & 11 & 11 & 1* \\

\hline

354 & 58A8E850 & 5 & 5 & 2 & 1* \\
354 & 363D8DA4 & 11 & 11 & 22 & 11,2 \\
354 & 9CD5E744 & 11 & 11 & 22 & 11,2 \\
354 & F04BA28C & 19 & 19 & 114 & 57,6 \\

\hline

355 & AF448B14 & 5 & 5 & 1* \\
355 & F42F2904 & 5 & 5 & 1 \\
355 & 47F0C740 & 5 & 5 & 120 & 6,4 \\
355 & 7BAFB45C & 11 & 11 & 11 & 1 \\
355 & 6453A3FC & 11 & 11 & 11 & 1 \\

\hline

435 & EB201050 & 5 & 25 & 650 & 25 \\
435 & 65CE0C00 & 11 & 11 & 55 & 11,5* \\
435 & 5641E95C & 11 & 11 & 55 & 11,5 \\

\hline

436 & 235F7508 & 2 & 4 & 2 & 1* \\
436 & C02F2A80 & 2 & 4 & 2 & 1 \\
436 & DFC6B8C0 & 2 & 4 & 8 & 4,2 \\
436 & 4D3C8B14 & 3 & 9 & 8 & 1 \\
436 & FF82A214 & 5 & 25 & 650 & 50,25 \\
436 & 4546E270 & 5 & 25 & 650 & 25 \\
436 & C4884090 & 7 & 7 & 28 & 14,4,2 \\
436 & 5230B364 & 7 & 7 & 28 & 14,2 \\
436 & 2D051038 & 7 & 7 & 14 & 7,2 \\
436 & F0997060 & 7 & 7 & 14 & 7,2 \\
436 & 1D1227CC & 7 & 49 & 672 & 336,112,42,14,12 \\
436 & B2B4B3D4 & 7 & 49 & 672 & 336,14,12 \\
436 & 6C29B2A4 & 13 & 13 & 91 & 13,7 \\
436 & 06F4054C & 13 & 13 & 91 & 13,7 \\
436 & DE4912E0 & 13 & 13 & 182 & 91,13,7 \\
436 & 417466F0 & 13 & 13 & 182 & 91,13,7 \\

\hline

534 & 0C62E214 & 5 & 25 & 130 & 10 \\
534 & 72414D0C & 5 & 25 & 260 & 10 \\
534 & 831E2D74 & 11 & 11 & 22 & 2 \\
534 & 5FC4CFF0 & 11 & 11 & 22 & 2* \\

\hline

535 & DCC3CACE & 5 & 5 & 1 \\
535 & F78E1C56 & 5 & 5 & 1 \\
535 & 9EF7A9C4 & 5 & 5 & 120 & 10,8,6,4 \\
535 & 5254DA16 & 19 & 19 & 57 & 3 \\
535 & A5C8752E & 19 & 19 & 57 & 3 \\

\hline

536 & BB5AEE10 & 5 & 5 & 120 & 8,6,4 \\
536 & 61385498 & 5 & 5 & 2 & 1* \\
536 & B009EB44 & 5 & 5 & 2 & 1 \\
536 & 3BA5C5A4 & 5 & 25 & 130 & 10 \\
536 & 9FDE7B38 & 5 & 25 & 260 & 130,10 \\
536 & 885F1184 & 7 & 49 & 672 & 336,28,16 \\
\hline
\end{longtable}
\end{center}

\section{Tested Honeycombs}\label{sec:tested}

In this Appendix we list the honeycombs we have tested our algorithm on.
For each honeycomb we give the number of tile types, the number of states, and the obtained coordination sequence
(i.e., the sequene $(c_k)$ where $c_k$ cnumber of cells in distance $k$ from some central tile).

\begin{table}
\begin{tabular}{|lll|p{25em}|}
\hline
$pqr$ & tiles & states & coordination sequence \\
\hline
$\{3,3,6\}$ & 1 & 35 & 1, 4, 12, 30, 72, 168, 390, 900, 2076, 4782, 11016, 25368, 58422, 134532, 309804, 713406, 1642824\ldots \\

$\{3,4,4\}$ & 1 & 50 & 1, 8, 44, 224, 1124, 5624, 28124, 140624, 703124, 3515624, 17578124, 87890624\ldots \\

$\{3,4,5\}$ & 3 & 207 & 1, 8, 56, 368, 2408, 15776, 103328, 676736, 4432304, 29029472, 190128992\ldots \\

$\{3,5,3\}$ & 1 & 569 & 1, 20, 260, 3212, 39470, 484760, 5953532, 73117640, 897985850, 11028509072, 135445355180\ldots \\

$\{3,5,4\}$ & 1 & 83 & 1, 20, 350, 6080, 105590, 1833740, 31845830, 553053800, 9604664270, 166800365060, 2896755264110\ldots \\

$\{3,5,5\}$ & 1 & 171 & 1, 20, 380, 7160, 134900, 2541680, 47888240, 902270720, 16999841000, 320296987760, 6034771758320\ldots \\

$\{4,3,4\}$ & 1 & 52 & 1, 6, 18, 38, 66, 102, 146, 198, 258, 326, 402, 486, 578, 678, 786, 902, 1026, 1158, 1298, 1446\ldots \\

$\{4,3,5\}$ & 5 & 3175 & 1, 6, 30, 126, 498, 1982, 7854, 31014, 122562, 484422, 1914254, 7564542, 29893554\ldots \\

$\{4,3,6\}$ & 1 & 67 & 1, 6, 30, 138, 630, 2862, 13002, 59046, 268158, 1217802, 5530518, 25116174, 114062154, 517999686\ldots \\

$\{5,3,4\}$ & 2 & 118 & 1, 12, 102, 812, 6402, 50412, 396902, 3124812, 24601602, 193688012, 1524902502, 12005532012\ldots \\

$\{5,3,5\}$ & 1 & 1583 & 1, 12, 132, 1392, 14592, 153092, 1605972, 16846332, 176716302, 1853736312, 19445500172\ldots \\

$\{5,3,6\}$ & 1 & 163 & 1, 12, 132, 1422, 15312, 164832, 1774422, 19101612, 205628532, 2213587422, 23829228912, 256521221232, 2761446340422\ldots \\
\hline
\end{tabular}
\caption{Characteristics of regular honeycombs\label{tab:regular}}
\end{table}

\begin{table}
\begin{tabular}{|lll|p{25em}|}
\hline
honeycomb & tiles & states & coordination sequence \\
\hline
$\{3,3,6\}$ c0 & 2 & 122 & 1, 6, 18, 51, 120, 282, 660, 1512, 3516, 8052, 18636, 42780, 98748, 227052, 523404, 1204476, 2774892, 6388140\ldots \\
$\{3,3,6\}$ c7 & 2 & 141 & 1, 4, 9, 16, 25, 37, 53, 74, 101, 135, 179, 237, 313, 411, 537, 700, 912, 1188, 1546, 2009, 2608, 3385, 4394, 5703, 7399, 9596, 12444, 16138, 20929, 27140\ldots \\
$\{3,4,4\}$ c0 & 4 & 128 & 1, 6, 21, 62, 174, 480, 1316, 3600, 9840, 26888, 73464, 200712, 548360, 1498152, 4093032, 11182376\ldots \\
$\{3,4,4\}$ c7 & 2 & 131 & 1, 4, 9, 17, 29, 46, 70, 104, 152, 219, 314, 449, 639, 907, 1286, 1821, 2576, 3643, 5150, 7277, 10281, 14524, 20515, 28975, 40923\ldots \\
$\{3,5,3\}$ c0 & 10 & 5144 & 1, 6, 24, 72, 210, 581, 1581, 4308, 11664, 31634, 85740, 232302, 629526, 1705766, 4622202, 12524754\ldots \\
$\{3,5,3\}$ c7 & 2 & 465 & 1, 4, 9, 16, 26, 41, 62, 90, 128, 181, 254, 352, 483, 660, 900, 1224, 1661, 2252, 3052, 4133, 5592, 7562, 10224, 13821, 18680, 25244, 34113\ldots \\
$\{4,3,4\}$ c0 & 3 & 535 & 1, 8, 30, 68, 126, 180, 286, 348, 510, 572, 798, 852, 1150, 1188, 1566, 1580, 2046, 2028, 2590, 2532, 3198\ldots \\
$\{4,3,4\}$ c7 & 48 & 2164 & 1, 4, 9, 17, 28, 42, 60, 81, 105, 132, 162, 196, 233, 273, 316, 362, 412, 465, 521, 580, 642, 708, 777, 849, 924, 1002, 1084, 1169, 1257, 1348, 1442, 1540, 1641\ldots \\
$\{4,3,5\}$ bs1 & 7 & 2420 & 1, 32, 302, 2732, 24302, 216032, 1920002, 17064032, 151656302, 1347842732, 11978928302, 106462512032\ldots \\
$\{4,3,5\}$ bs2 & 74 & 18279 & 1, 32, 122, 452, 1202, 4232, 10922, 37832, 97202, 336452, 864122, 2990432, 7680002, 26577632, 68256122, 236208452, 606625202, 2099298632\ldots \\
$\{4,3,5\}$ d2 & 34 & 2094 & 1, 12, 42, 152, 402, 1272, 3242, 10092, 25602, 79532, 201642, 626232, 1587602, 4930392, 12499242, 38816972\ldots \\
$\{4,3,5\}$ s2 & 40 & 7793 & 1, 6, 21, 58, 141, 322, 720, 1602, 3551, 7842, 17283, 38074, 83898, 184894, 407415, 897630, 1977647, 4357242, 9600216\ldots \\
$\{4,3,6\}$ s2 & 200 & 11843 & 1, 6, 21, 61, 165, 435, 1137, 2961, 7701, 20019, 52029, 135213, 351381, 913131, 2372937, 6166497, 16024725\ldots \\
$\{5,3,4\}$ c0 & 12 & 4116 & 1, 10, 55, 237, 970, 3900, 15622, 62595, 250605, 1003462, 4017650, 16086450, 64408017, 257882475, 1032531650, 4134137032\ldots \\
$\{5,3,4\}$ c7 & 2 & 336 & 1, 4, 9, 17, 29, 46, 70, 103, 148, 210, 295, 411, 569, 783, 1074, 1470, 2008, 2740, 3736, 5091, 6934, 9440, 12848, 17483, 23786, 32358\ldots \\
\hline
\end{tabular}
\caption{Characteristics of subdivided honeycombs\label{tab:irregular}}
\end{table}

\begin{figure*}[h]
\begin{center}
  \includegraphics[width=.4\linewidth]{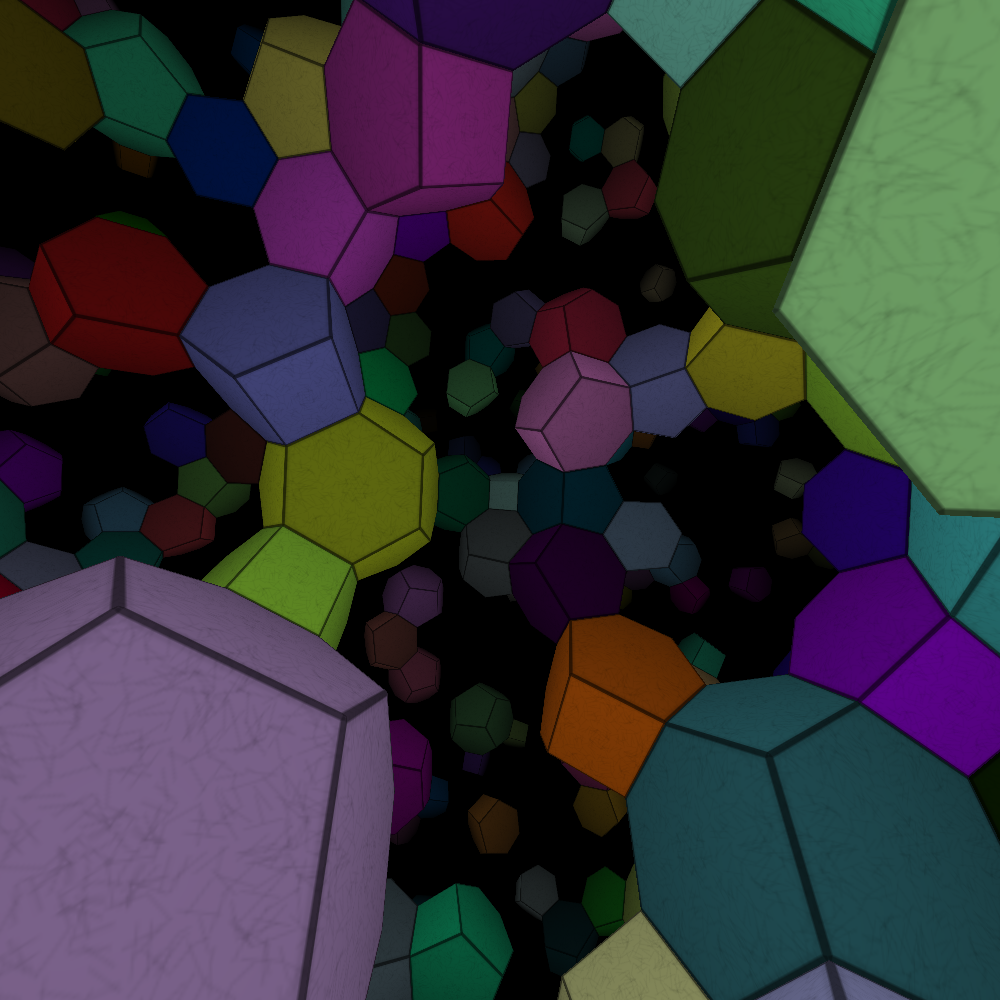}
  \includegraphics[width=.4\linewidth]{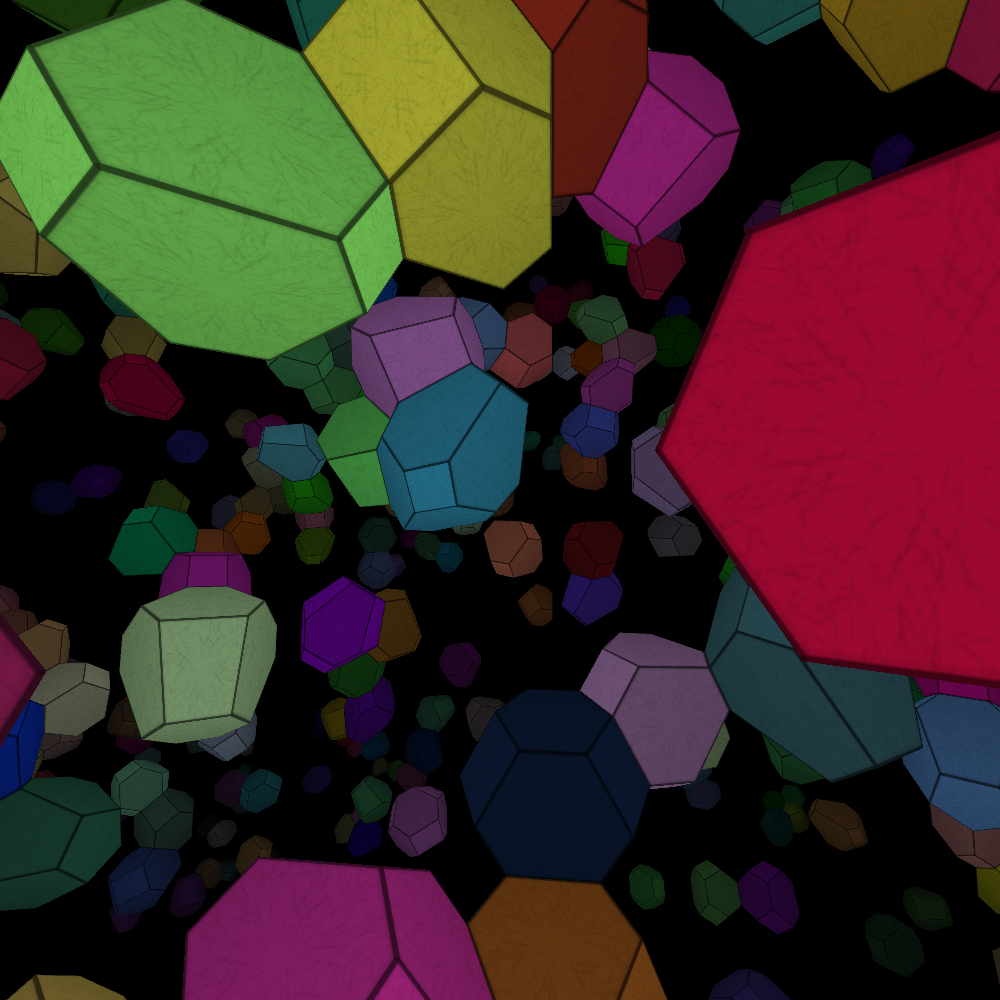}
  \includegraphics[width=.4\linewidth]{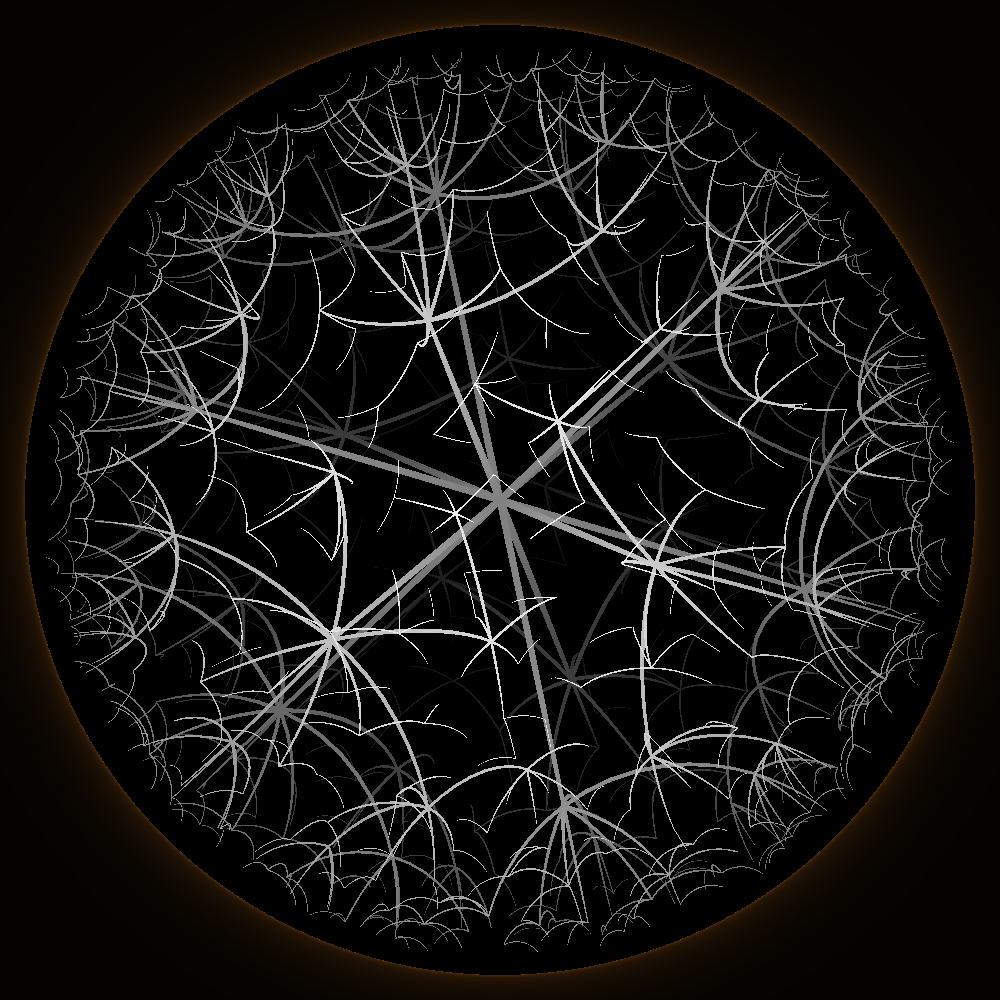}
  \includegraphics[width=.4\linewidth]{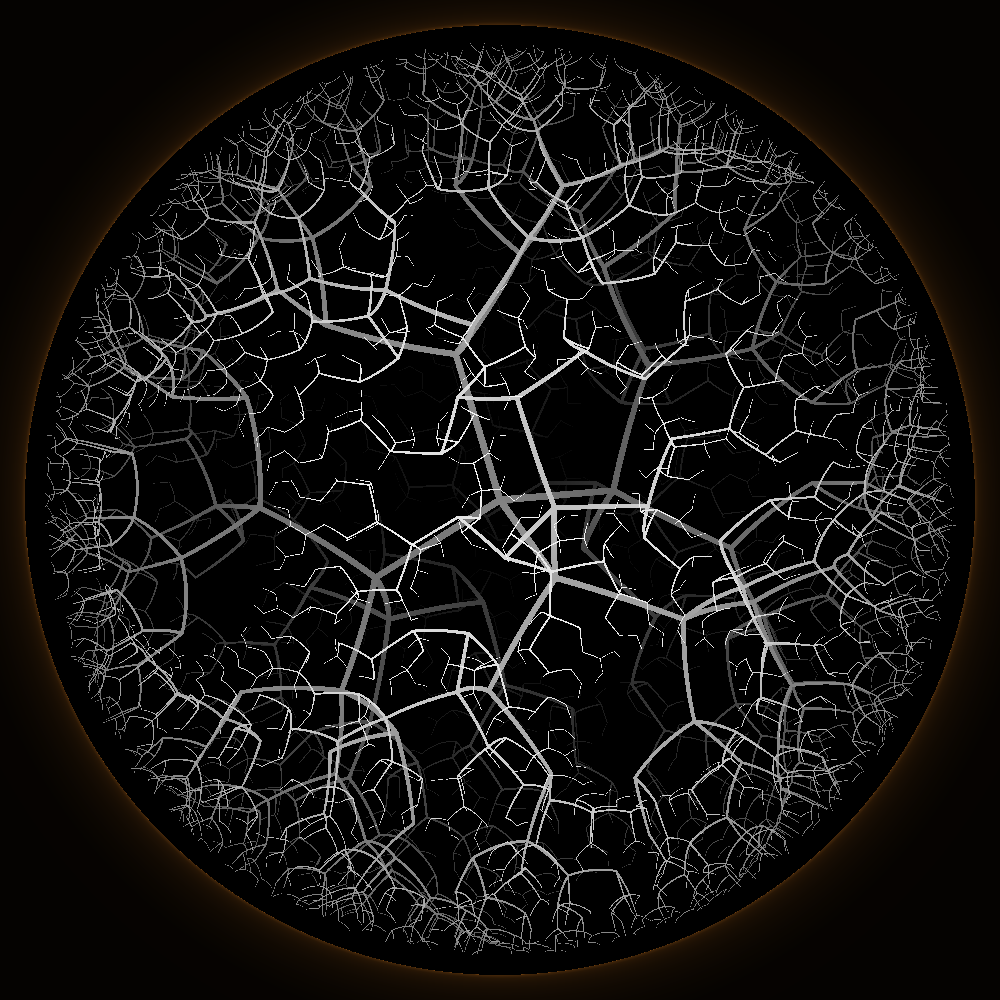}
  \includegraphics[width=.4\linewidth]{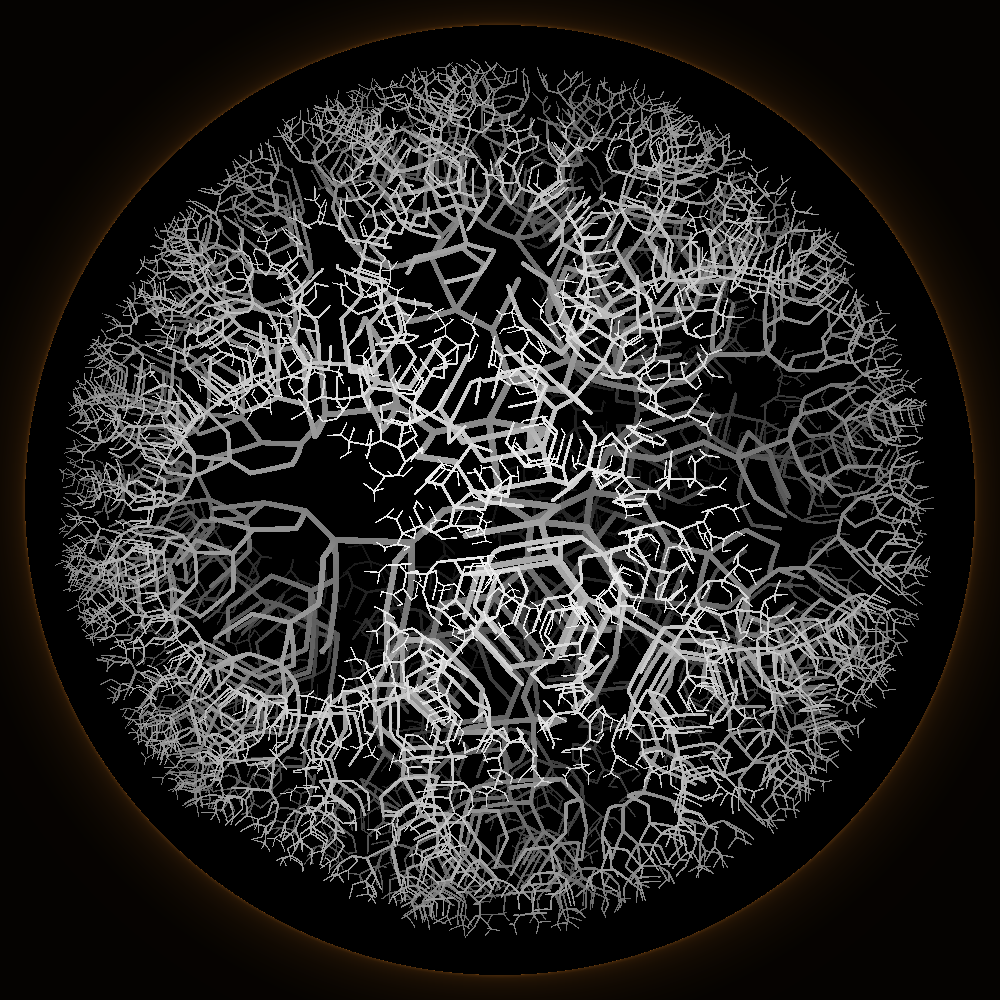}
  \includegraphics[width=.4\linewidth]{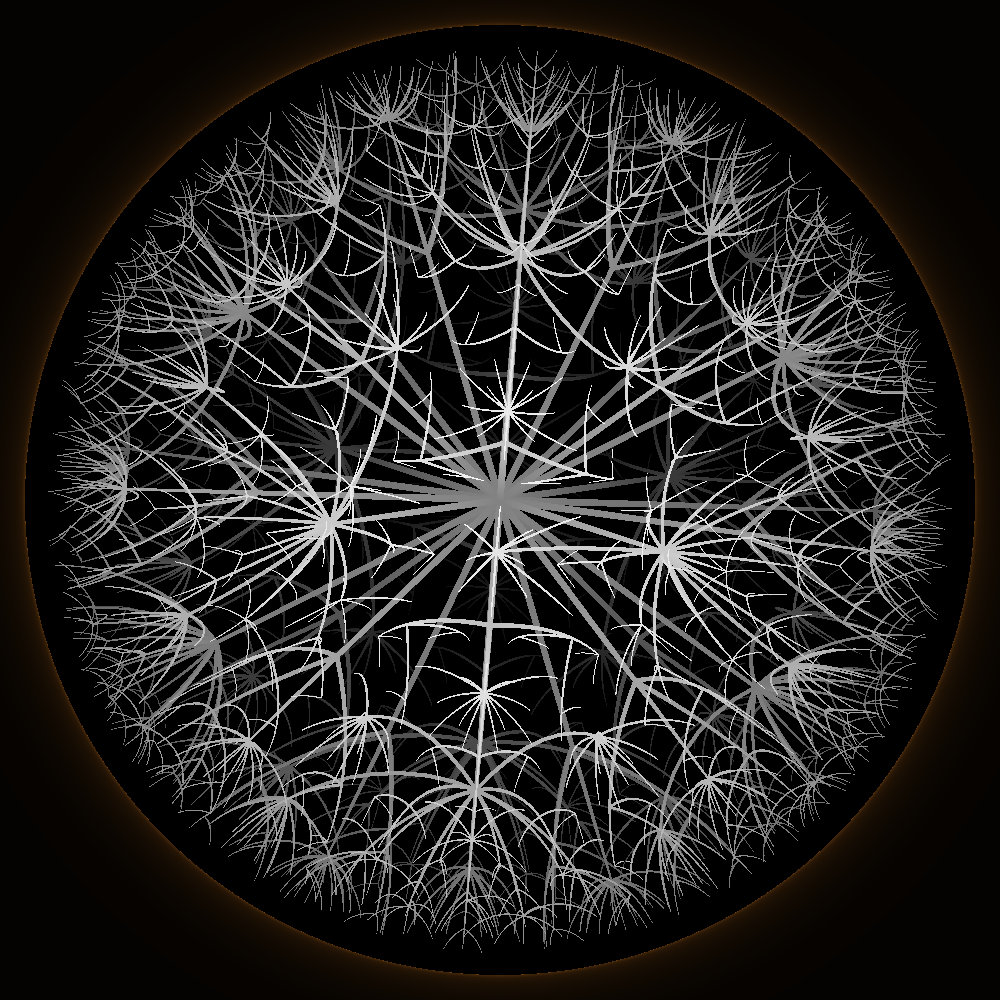}
\end{center}
\caption{Top row: the honeycombs \{4,3,5\}-bs1 and \{4,3,5\}-bs2, with some cells randomly filled. \label{hcimages}
Bottom row: The tree structures for: \{5,3,4\}, \{3,3,6\}, \{4,3,4\}-c0, \{3,4,4\}-c7, \{4,3,5\}-bs1, \{4,3,5\}-bs2. Poincar\'e ball model. \label{treeball}}
\end{figure*}

\begin{figure*}[h]
\begin{center}
  \includegraphics[width=.4\linewidth]{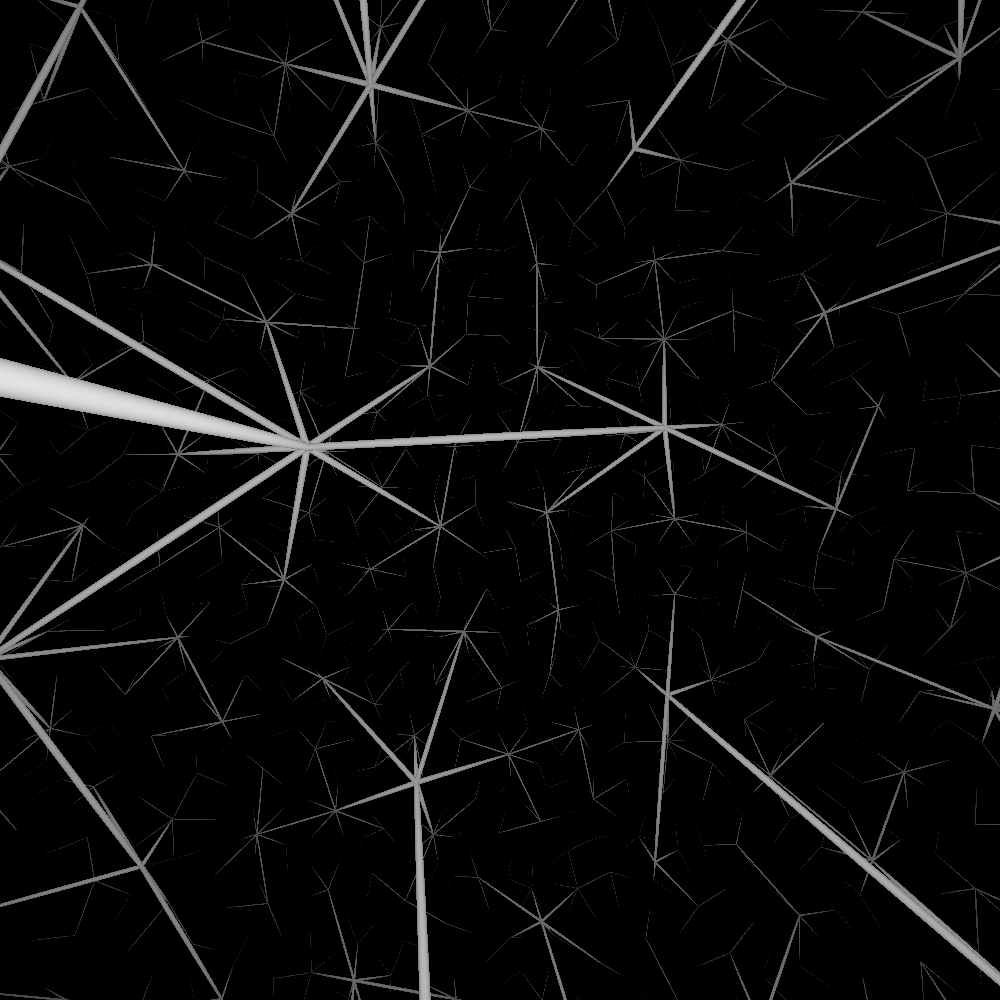}
  \includegraphics[width=.4\linewidth]{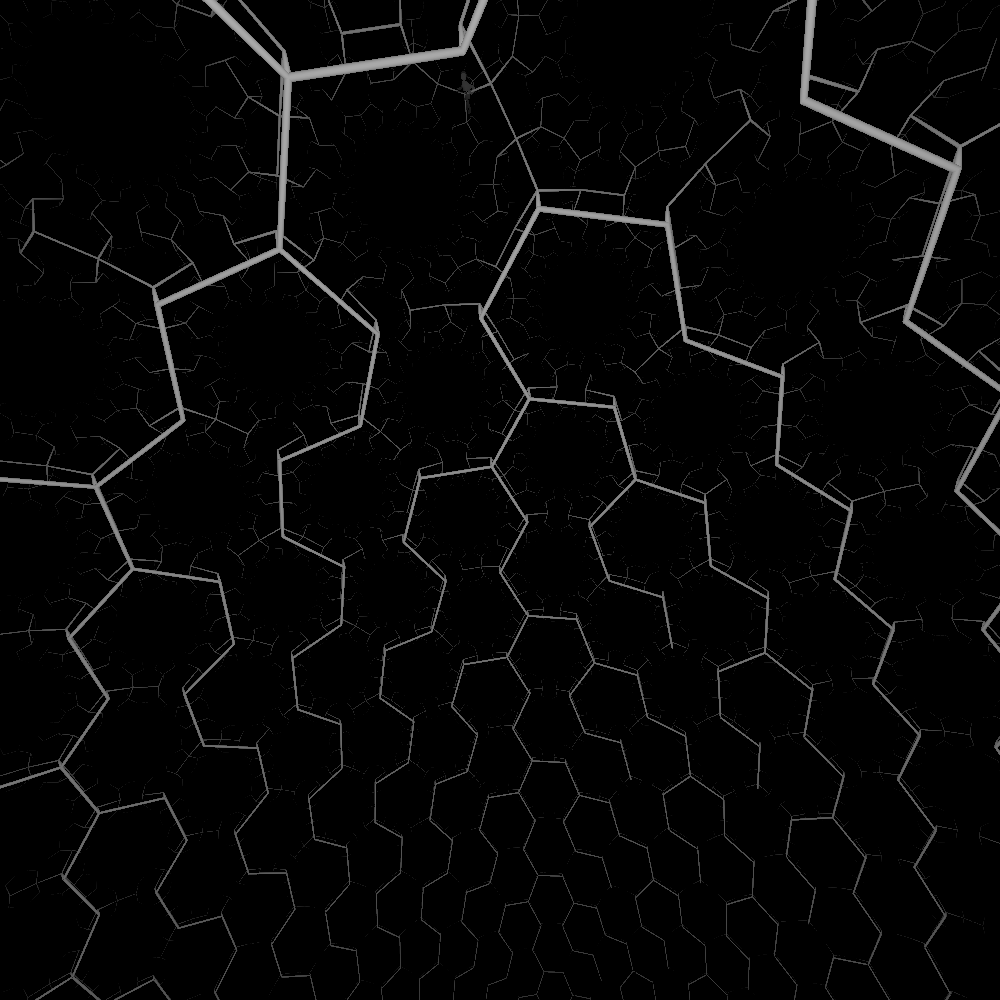}
  \includegraphics[width=.4\linewidth]{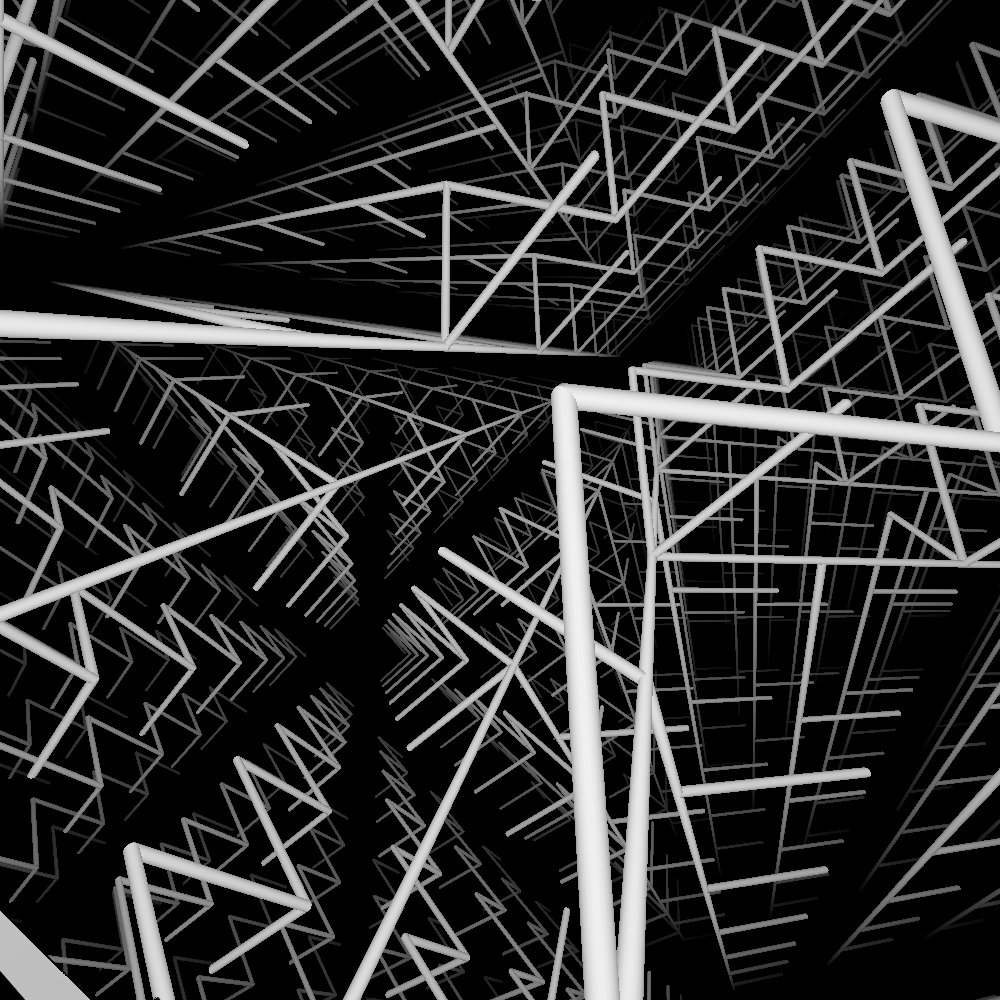}
  \includegraphics[width=.4\linewidth]{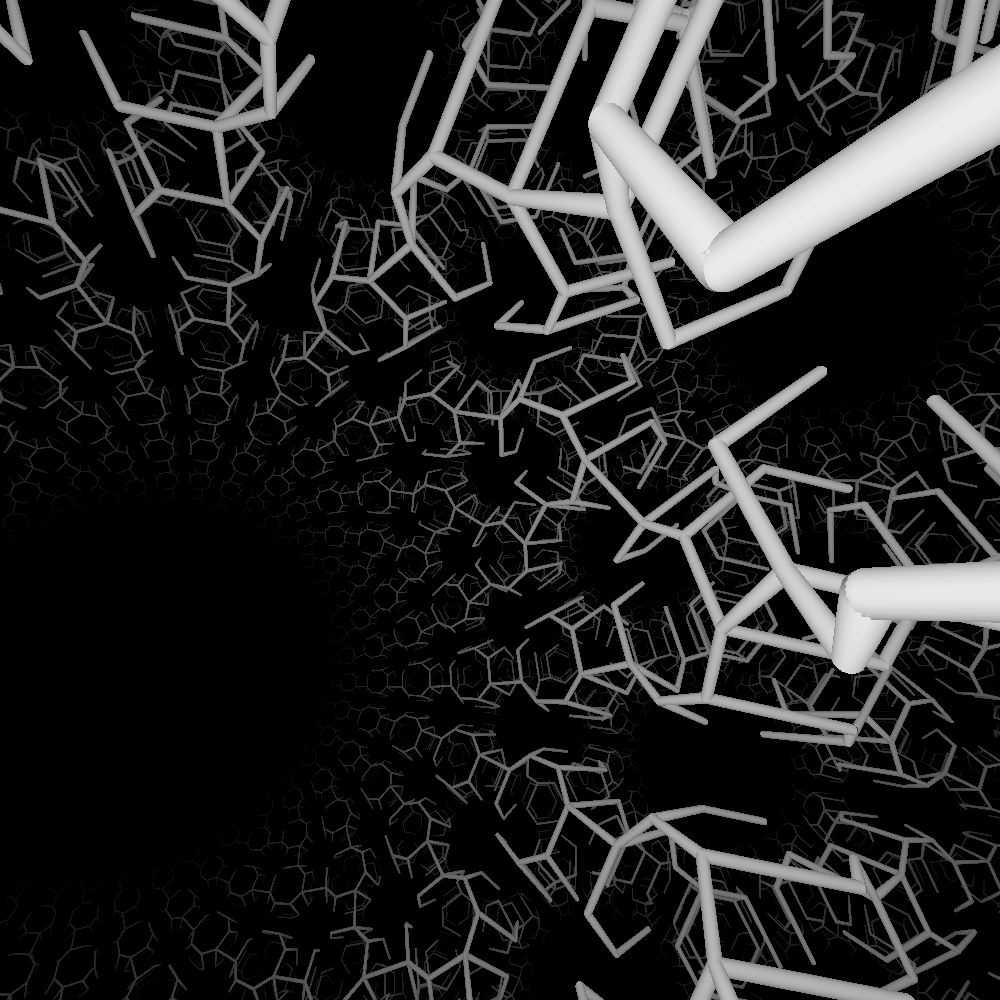}
  \includegraphics[width=.4\linewidth]{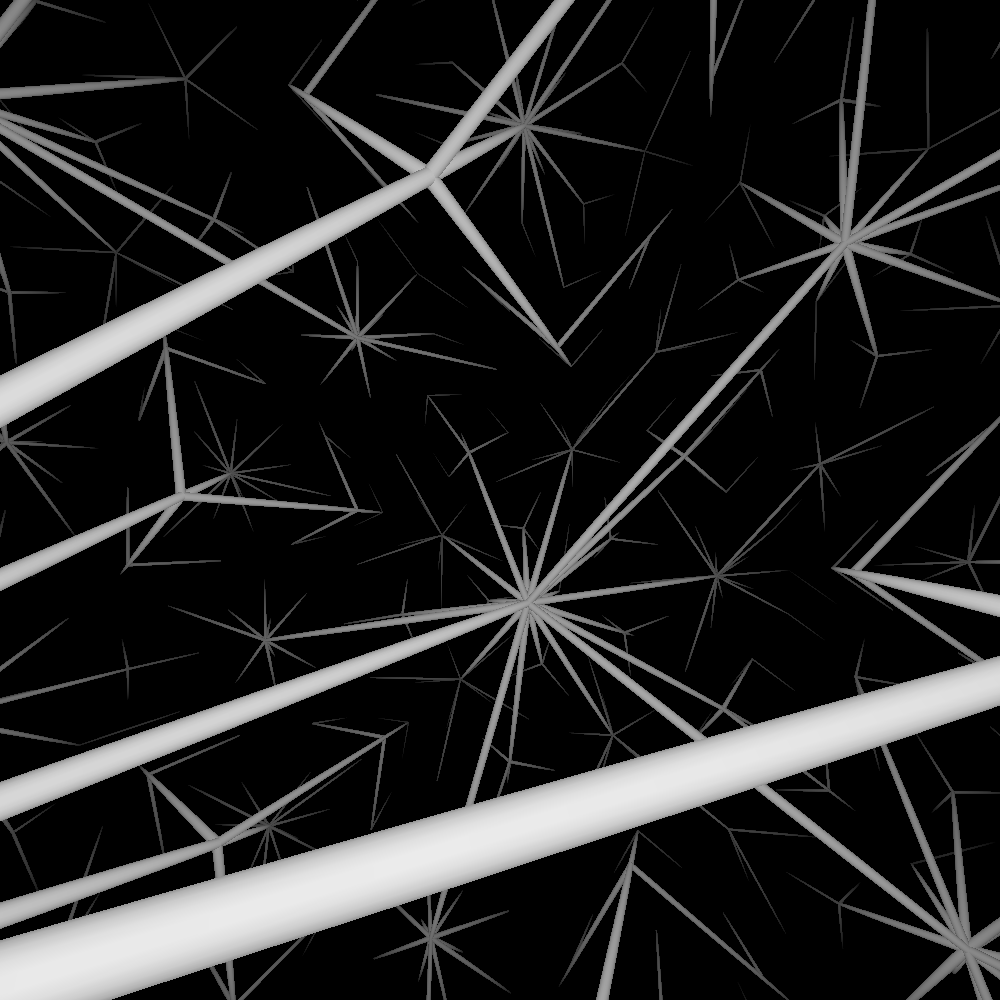}
  \includegraphics[width=.4\linewidth]{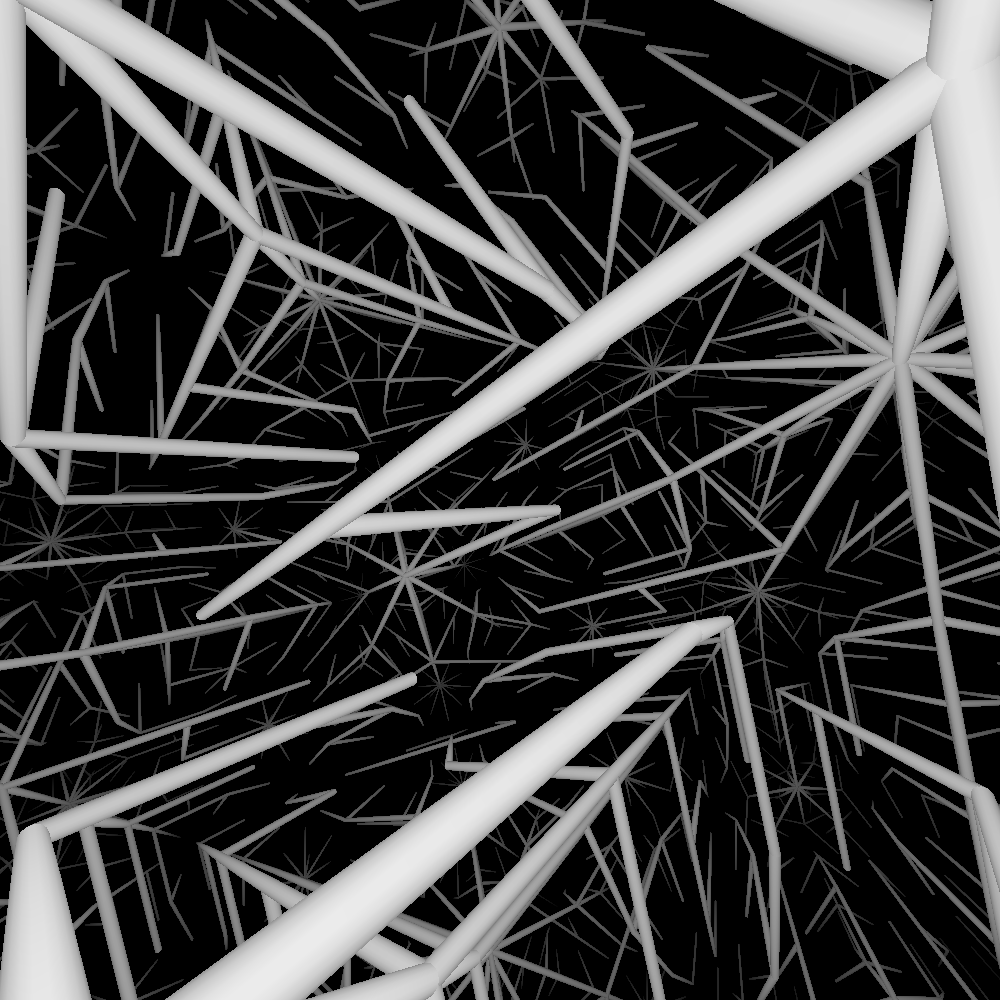}
\end{center}
\caption{The tree structures for: \{5,3,4\}, \{3,3,6\}, \{4,3,4\}-c0, \{3,4,4\}-c7, \{4,3,5\}-bs1, \{4,3,5\}-bs2. Native perspective projection.\label{treeimages}}
\end{figure*}

See Figures \ref{treeball} and \ref{treeimages} for example images of trees obtained.
In Table \ref{tab:regular} we provide descriptions of regular honeycombs. In table \ref{tab:irregular} we similarly describe subdivided regular honeycombs. We consider the following
regular subdivisions: (we denote the number of faces of \{p, q\} by $n$)
\begin{itemize}
\item c7: Coxeter tetrahedra (each cell subdivided into $2np$)
\item c0: each cell subdivided into $n$ pyramids,  then the pyramids are glued into double pyramids
\item s2: each cube subdivided into 8 cells by three orthogonal symmetry planes
\item d2: the dual of s2
\item bs1: each cube subdivided like the period of bitruncated cubic honeycomb (see Figure \ref{hcimages})
\item bs2: each cube subdivided like the double period of bitruncated cubic honeycomb (see Figure \ref{hcimages})
\end{itemize}

\end{document}